\documentclass[letterpaper,twocolumn,preprintnumbers,floatfix]{revtex4}

\usepackage{amsmath, amsfonts, amssymb, amscd, latexsym}
\usepackage{url}
\usepackage{times}
\usepackage{graphicx}
\usepackage{program}
\usepackage{color}
\usepackage{rotating}
\usepackage{array}

\newtheorem{theorem}{Theorem}[section]
\newtheorem{lemma}[theorem]{Lemma}

\newcounter{thedef} \setcounter{thedef}{1}
\newenvironment{definition}[1][Definition]{\begin{trivlist}
\item[\hskip \labelsep {\bfseries #1 \arabic{thedef}} \stepcounter{thedef}]}{\end{trivlist}}

\newcommand{\jo}[1]{ }

\begin{document}

\title{High-performance Energy Minimization in Spin-glasses \\ with Applications to Adiabatic Quantum Computing}

\author{
\begin{tabular}{ccccc}
H\'{e}ctor J. Garc\'{i}a & \;\;\;\; & Igor L. Markov \\
\footnotesize \tt hjgarcia@eecs.umich.edu & &
\footnotesize \tt imarkov@eecs.umich.edu \\
\end{tabular}
}

\affiliation{\\}




\begin{abstract}
Energy minimization of Ising spin-glasses has played a central role in
statistical and solid-state physics, facilitating studies of phase transitions
and magnetism. Recent proposals suggest using Ising spin-glasses for
non-traditional computing as a way to harness the nature's ability to find
min-energy configurations, and to take advantage of quantum tunneling to boost
combinatorial optimization. Laboratory demonstrations have been unconvincing so
far and lack a non-quantum baseline for definitive comparisons. In this work we
{\em (i)} design and evaluate new computational techniques to simulate natural
energy minimization in spin glasses and {\em (ii)} explore their application to
study design alternatives in quantum adiabatic computers. Unlike previous work,
our algorithms are not limited to planar Ising topologies. In one CPU-day, our
branch-and-bound algorithm finds ground states on $100$ spins, while our local
search approximates ground states on $1,000,000$ spins. We use this
computational tool as a simulator to study the significance of {\em
hyper-couplings} in the context of recently implemented adiabatic quantum
computers.
\end{abstract}

\maketitle

\section{Introduction}  \label{sec:intro}

The Ising spin-glass model was first proposed by E. Ising in 1925
as a mathematical model to understand the dynamics of phase transitions in
ferromagnetic systems. Such systems are composed of particles that can be in
either of two possible energy {\em spin} states. These spins interact in pairs
to produce an energy landscape that describes the overall behavior of the
system. The model is described in graph-theoretic terms by representing atoms
in a crystal with vertices and bonds between atoms with edges. Despite its
simplicity, the model has become an essential research tool in the analysis of
different kinds of physical systems such as stiff polymers \cite{MullerF03} and
genome sequences \cite{BaranK06} which can be mapped exactly or approximately
to an Ising model. Since physical systems found in nature are often disordered,
unless cooled to $0\,^\circ\mathrm{K}$, the model incorporates
randomness---either in the realization of the atomic couplings or the spin
states of the atoms.
	
    \begin{definition}
        {\em Spin glasses} are solid materials in which the magnetic moments (spins) have
        disordered orientations and the strength of the nearest-neighbor spin interactions (bonds) 
	are randomly distributed. 
    \end{definition}

Physical and chemical properties of a crystal depend on the total
energy of the bonds, which depend on atomic states. In particular, the lower
the total energy, the harder the material. Estimating total energy by a
graph-based function facilitates the use of graph algorithms to study
properties of solids. For instance, if an Ising graph
represents a ferromagnetic system, then finding the {\em generating function}
of cuts in said graph is equivalent to calculating the distribution of physical
states over all possible energy levels. On the other hand, finding the
minimum-cut (max-flow) of an Ising-model graph that represents an amino-acid
sequence is equivalent to calculating the lowest-energy configuration for a
corresponding protein \cite{BakkH03}. For most Ising models commonly studied in the
Physics community, the latter problem is equivalent to finding the ground-state
energy of the underlying physical system. Formally, the {\em ground-state
determination problem} (GSD) is defined as follows. Given an instance of an
Ising-model graph, find the set of spin-state values or {\em spin
configuration} that minimizes the overall energy of the underlying physical
system described by the graph. Such a state is known as the {\em ground state}
of the system. Barahona \cite{Barahona83} proved that, for a general random-field model,
the GSD problem is NP-hard. Thus, all known algorithms for finding optimal
solutions exhibit super-polynomial runtime. \\


\noindent {\bf Computing based on energy minimization in physical systems}.
Given that many physical systems have a natural ability to find least-energy
states, researchers are currently attempting to exploit this phenomenon to
perform useful computation. At the atomic scale, in addition to high
bit-density, energy optimization can be aided by quantum tunneling,
which effectively reduces the number of local minima. Thus, GSD problems
are of particular interest to quantum-information researchers because they are
suitable candidates for evaluating the performance of {\em adiabatic quantum
computers} (AQCs).  Recently developed AQCs employ an architecture based on
Ising spin systems \cite{Dwave}. First, the spin system is configured to represent
a given combinatorial problem, i.e., the spin interactions are carefully controlled
rather than random as in spin glasses. The ground state is found via {\em quantum
annealing} (the quantum analogue of thermal annealing), then read off as a bit
sequence and interpreted as an answer to the combinatorial problem. While the
classical formulation of GSD is NP-complete, Oliveira and Terhal \cite{OliveiraT08}
proved that formulating a general GSD instance in the context of AQC is
QMA-complete\footnote{QMA-complete is the quantum analogue of NP-complete.}.
Since the complexity of GSD is universal with respect to both quantum and conventional forms 
of computation, it is important to consider how well an
approximation to the ground-state energy can be obtained by purely classical
combinatorial optimization techniques. Consequently, Bansal et al. \cite{BansalBT09}
proposed an approximation algorithm for GSD on Ising spin lattices, {\em which
essentially simulates these AQC architectures \cite{KaminskyLO4, Dwave}, and
thus limits their potential for quantum speed-ups}. 
To approximate the least energy with $\epsilon$ accuracy,
the algorithm from \cite{BansalBT09} requires runtime exponential in
$1/\epsilon$, which is hardly practical. In contrast, we propose a
branch-and-bound algorithm and a high-performance local search that quickly
finds near-optimal energy values for arbitrary Ising topologies. Such techniques 
can be used to study properties of solid-state
materials, as well as critically assess the performance of
non-traditional computing devices based on energy minimization in Ising
spin-glasses. The main contributions of our work are summarized as follows.
\begin{itemize}
 \item A branch-and-bound algorithm for solving GSD exactly
       on Ising systems with up to $100$ spins.
 \item A high-performance local search algorithm for Ising
       spin-glasses. Empirical results show that it scales better than other GSD
       algorithms and produces near-optimal solutions for small- to medium-sized
       instances.
 \item A generalization of GSD for simulating energy minimization in physical systems. 
In particular, we propose a self-contained number-factoring algorithm based on this approach. These results can be used as a baseline for evaluating the performance of non-traditional computing devices that solve hard problems via energy minimization (e.g. AQC).
 \item A comparison of two potential spin-glass architectures for AQC number factoring.
\end{itemize}
The rest of the paper is structured as follows. In Section \ref{sec:prev} we
discuss common variants of Ising models, as well as the best known algorithms
for solving GSD. Sections \ref{sec:exactgs} and \ref{sec:heuristicgs} introduce
our algorithms for finding ground states.  Section \ref{sec:results} reports
empirical results for calculating ground states. We build upon these findings
and describe our generalization of Ising models using hypergraphs in Section
\ref{sec:factor} and compare two potential architectures for AQC integer 
factoring. We finalize the discussion with concluding remarks in Section
\ref{sec:conclude}.

\section{Background and Previous Work}  \label{sec:prev}

Due to its flexibility, the Ising model has been reinterpreted by researchers
to analyze different kinds of physical systems. Potts \cite{Wu82} suggested a
generalization of the model where the spin values are uniformly distributed
about the unit circle. In the Edwards-Anderson (EA) \cite{EdwardsA75} interpretation,
the spins are binary $\pm 1$ values, and the strengths of the atomic couplings
are independent, identically distributed random variables. Ising systems that share this property are known
as {\em spin glasses} since the random positive/negative edge-weights simulate
the solid-state structure of chemical glass. Let $G_{ising}=(V,E)$ denote a spin-glass graph with $n$ vertices.
Each vertex $u \in V$ is annotated with spin value $S_u \in \{\pm 1\}$
and is assigned a magnetization weight $h_u$. For $u, v \in V$, define $(u,
v) \in E$ to be an edge representing a bond between two adjacent spins with
assigned weight $J_{u,v}$ chosen randomly from either the standard Gaussian
($\mu = 0, \delta = 1$) or the $\pm1$-bimodal distributions. The internal energy of the system for a
particular configuration of spin values $\sigma = \{S_i\}$ is given by
    \begin{equation} \label{eq:energy}
        E(\sigma) = -\displaystyle\sum_{(i, j) \in E}^n J_{i,j}S_iS_j - \sum_ih_iS_i
    \end{equation}
\noindent where the summation considers all pairs of adjacent spins. Putting
together the energies of all spin configurations gives the {\em Hamiltonian} of
the system. Thus, the ground state is given by $E_{gs} = \min(E(\sigma)$ $\mid$
$\forall$ $\sigma\in\pi_n)$, where $\pi_n$ is the set of all possible $n$-spin
configurations. Whether we are interested in the lowest-energy value or the
$n$-spin configuration with such energy, $\mid\pi_n\mid = 2^n$ because
each of the spins can take on one of two possible values. As discussed in the
next section, energy minimization is typically NP-hard. Thus, calculating
the ground state exactly using an exhaustive search algorithm is feasible only
for small Ising models. To provide a scalable way of finding ground states or
approximating their energies, we need to employ heuristics such as those
proposed in Section \ref{sec:heuristicgs}. First, consider the following
definitions. 

    \begin{definition}
	A bond is {\em satisfied} if and only if the configuration of its
	incident spins minimizes its weight (coupling strength) such that
	$-J_{i,j}S_iS_j = -\mid J_{i,j}\mid$; otherwise, the bond is {\em unsatisfied}.
    \end{definition}


    \begin{definition}
	A set of spins $S \subseteq V$ is {\em frustrated} if there is no
	configuration of the spins that satisfies all the bonds $(u, v)$ connecting the
	spins in the set, i.e., $v, u \in S$.
    \end{definition}

    \begin{table*}[!htb]
        \begin{center}
            \begin{tabular}{|c|c|c|c||c|c|}
                \hline
                Lattice    & Boundary   & External      & Bond-weight & \bf NP-hard? & Poly-time \\
                dimensions & conditions & magnetization & signs       &              & algorithm \\ \hline \hline
                $1$ & $\leq 1$ & Yes/No & $\pm$ & \bf No & Analytical \\ \hline
                $2$ & $0$ & No & $\pm$ & \bf No & MWPM \\ \hline
                $2$ & $0$ & Yes & $\pm$ & \bf Yes & -- \\ \hline
                $2$ & $\leq 2$ & Yes & $+$ & \bf No & Max-flow \\ \hline
                $2$ & $1$ & No & $\pm$ & \bf No & MWPM \\ \hline
                $2$ & $1$ & Yes & $\pm$ & \bf Yes & -- \\ \hline
                $2$ & $2$ & Yes/No & $\pm$ & \bf Yes & -- \\ \hline
                $N>2$ & $\leq N$ & Yes/No & $\pm$ & \bf Yes & -- \\ \hline
            \end{tabular}
            \vspace{-10pt}
		\parbox{10cm}{
            \caption{ \label{tab:gsdnpc}
			Ising spin-glass properties that make the GSD problem NP-hard on lattices. 
			MWPM stands for minimum-weight perfect matching.}
		}
        \end{center}
    \end{table*}

\noindent Alternatively, the ground state is defined by the spin configuration
that minimizes frustration in the Ising system, provided that the system is not
affected by an external magnetic field. That is, in a spin configuration without
any frustrated spin-sets, all bonds have been satisfied and the energy of the
system is $E_{gs} = -\sum_{(i,j)\in E}\mid J_{i,j}\mid$, which is a lower bound of the
energy function.

In general, Ising-model graphs are not limited to a particular topology, but
two- and three-dimensional lattices are most commonly considered in the
literature. To simulate the behavior of infinite spin glasses, it is common to
require that the spins lying on the dimensional boundary be connected to the
spins on the opposing boundary on the same dimension. This can be viewed as a
type of (periodic) boundary condition. In particular, only one periodic
boundary condition is imposed for each dimension of the lattice. However, it is
sometimes desirable to have boundary conditions on some but not all of the
lattice dimensions. For example, a $2$-D lattice may have zero
(planar grid), one or two boundary conditions. When no boundary conditions are
imposed, some (boundary) spins have fewer than four neighbors. \\

\noindent {\bf Complexity of GSD}. In his work, Barahona \cite{Barahona83} showed that
the NP-complete task of finding a maximum set of independent edges (edges
with no common incident vertices) in a graph can be reduced to GSD on a cubic grid. 
Although most variations of GSD are known to be NP-hard, there
are a few cases where the structure of the graph can be exploited to solve the
problem in polynomial time. For example, Bieche et al. \cite{Bieche80} proved that
the GSD problem on planar graphs with zero magnetization can be solved in polynomial time by showing a reduction to the minimum-weight perfect matching (MWPM) problem. It follows
from their work that GSD instances with zero magnetization ($h_i = 0$) and $0$-
or $1$-periodic boundary conditions can be solved in $O(n^3)$ time \cite{Edmonds65}.
Specifically, the algorithm identifies the frustrated faces in a grid
($4$-cycles that have an odd number of negative edges) as vertices in a new
graph $G_F = (F, E_F)$. $G_F$ is complete and each edge $e = (f_i, f_j) \in
E_F$ is assigned a weight equal to the sum of the absolute weights of
the edges in the original graph that are crossed by the minimum path that connects
$f_i$ and $f_j$. Recall that the edges in $G_F$ are connecting sets of
frustrated spins. Therefore, minimizing the sum of the weights connecting $f_i
\in F$ implies that we are minimizing frustration (Definition 3). Thus,
finding the ground state is reduced to finding a MWPM on $G_F$. However,
although MWPM is solvable in polynomial time, the runtime is impractical for
large instances and suffers from a big memory footprint due to the size of
$G_F$. To overcome these limitations, the work in \cite{PardellaL08} describes a
heuristic based on the MWPM reduction where a reduced graph $\tilde{G}_F$ is
used instead of a complete one at the cost of sub-optimality.

Table \ref{tab:gsdnpc} shows the Ising lattice properties that make a GSD
problem poly-time solvable and identifies the algorithms that are commonly
used. Note that the number of dimensions, the number of boundary conditions
and the presence of an external magnetic field are the main factors in
determining whether an instance is NP-hard or not. More precisely, when we
consider lattices with more than two dimensions or with two boundary
conditions, the graph is no longer planar and the reduction to MWPM breaks
down. Another special case considered in the literature is that of
ferromagnetic ($J_{i,j} > 0$) GSD instances. Barahona \cite{Barahona94} reduced this
particular problem to ($s$-$t$)-min-cut or max-flow. 




\section{Finding Exact Ground States}   \label{sec:exactgs}

To better control the trade-offs between runtime and solution quality of
heuristics, it is important to design algorithms that are guaranteed to find
exact ground states on smaller instances. The solutions obtained from such
instances are used to evaluate scalable heuristics. \\

\noindent {\bf Branch-and-bound (B\&B)}. For general optimization problems,
B\&B considers incomplete or {\em partial} solutions, where only a
subset of the problem variables are assigned admissible values. Partial
solutions are systematically constructed via {\em branching}. The
branching process only develops partial solutions that are deemed promising,
i.e., those that may lead to the optimal solution. Conversely, partial
solutions whose cost is too high, are ``bounded away'' or pruned. \\

\noindent {\bf B\&B on Ising systems}. Our B\&B algorithm proceeds as follows. First, all spins are labeled as unassigned--their value can be set in the future to either $1$ or $-1$. The algorithm then calculates the lower bound of Equation \ref{eq:lbe},
    \begin{equation}
        E_{lb} = -\sum_{(i, j) \in E}^n \mid J_{i,j}\mid - \sum_i\mid h_i\mid
        \label{eq:lbe}
    \end{equation}
\noindent It then selects a spin $i$ and branches on one of the possible values for the spin that has not been explored yet. In each branch, the incremental change in $E_{lb}$ caused by the assignment is recorded as follows. For each spin $j$ adjacent to $i$ that has already been assigned, increase (decrease) $E_{lb}$ by twice the amount of the positive (negative) bond connecting $i$ and $j$ if they have opposing (aligned) spin values,
    \begin{equation}
        E^\delta_{lb} = \begin{cases}
                        2\displaystyle\sum_{(i, j) \in E}J_{i,j}S_iS_j & \text{if $S_i \neq S_j$ and $J_{i,j} > 0$,} \\
                        -2\displaystyle\sum_{(i, j) \in E}J_{i,j}S_iS_j & \text{if $S_i = S_j$ and $J_{i,j} < 0$}
                 \end{cases}
        \label{eq:edelta}
    \end{equation}
\noindent Furthermore, the corresponding change due to the magnetization of the spin is also recorded,
    \begin{equation}
        E^\delta_{lb} = \begin{cases}
                            2h_i & \text{if $S_i = -1$ and $h_i > 0$,} \\
                            -2h_i & \text{if $S_i = +1$ and $h_i < 0$}
                 \end{cases}
        \label{eq:edeltam}
    \end{equation}
\noindent Once the spin is assigned, the algorithm branches out to another spin and performs the same procedure. When all spins have been assigned, $E_{lb}$ represents the energy of the spin configuration generated by the branching process. To continue searching the configuration space, the branching process backtracks to the last assigned spin, flips the spin's value and updates $E_{lb}$. If both spin values have already been tried, then the algorithm continues backtracking while relabeling the spins as unassigned. Since each spin can take one of two values, this branching process generates a full binary search tree where the leaves correspond to all possible spin configurations in the Ising system.

Initially, we use a linear-time greedy approximation ($E_{gs}$) as our bounding
value. During the branching process, if the energy of the partial solution
exceeds $E_{gs}$, then we can safely prune this branch and backtrack to the
previous assigned spin without making any further assignments. The algorithm
either tries the opposite spin value or backtracks again if both spin values
have already been tried. If the search assigns all the spins in the graph and
the corresponding minimal energy state is lower than $E_{gs}$, then we set
$E_{gs}$ to this new energy value. After searching all promising branches,
$E_{gs}$ will assume the ground-state energy. This standard bounding technique
alone improves the scalability of the branching process by an order of
magnitude over exhaustive search (see Figure \ref{fig:bbchart}).

    \begin{figure}[b]
	\hspace{-5mm}
	\includegraphics[width=\linewidth, height=58mm]{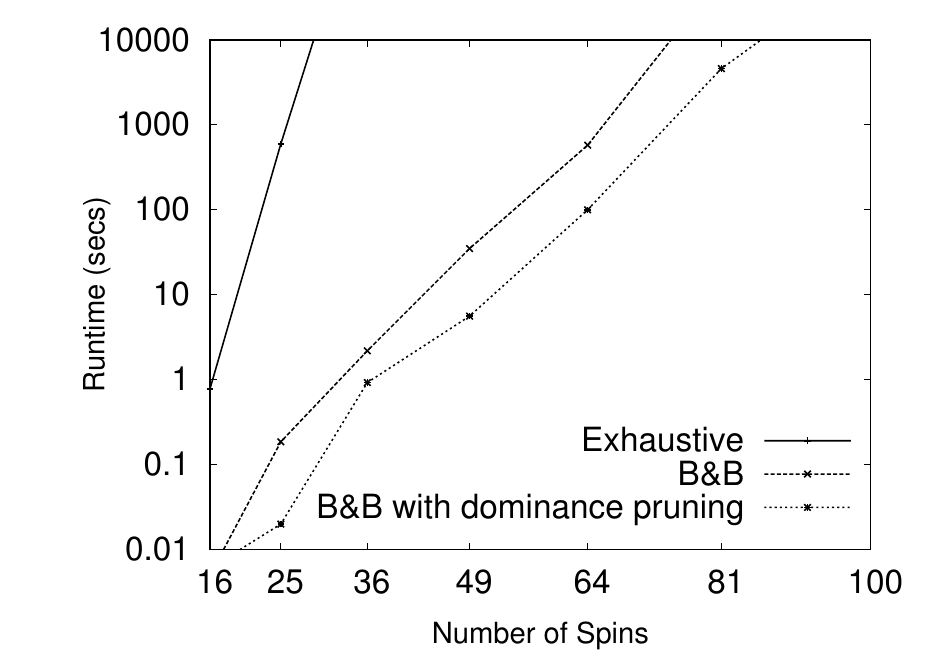} 
        \parbox{8cm}{
	 \caption{\label{fig:bbchart}Performance of B\&B techniques on $2$-D spin lattices.}}
    \end{figure}

To further improve the scalability of our B\&B algorithm, we designed a {\em
prune-by-dominance} technique that consists of identifying partial solutions
whose partial energy can be improved (lowered) by modifying the configuration
of currently assigned spins. Note that, whenever we assign a spin $s$, there is
a set $F_s$ of spins adjacent to $s$ for which all neighboring spins (including
$s$) have also been assigned. Early in the branching process, $F_s$ is
likely to be empty since only a few spins have been assigned. As spins 
are assigned, the set $F_s$ increases. Figure \ref{fig:bbdom} shows an
example of $s$ and $F_s$ on a small grid. The size of $F_s$ is no greater than
the degree of $s$. Note that the energy of the spins in $F_s$ is localized in
the sense that it will not be affected by further spin assignments. 

    \begin{lemma} \label{lem:bblocale}
	Let $F_s$ be the set of spins such that $\forall i \in F_s$, all spins
	adjacent to $i$ are assigned. Then the partial energy of the spins in $F_s$
	will not be affected by additional spin assignments.
    \end{lemma}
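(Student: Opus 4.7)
The plan is to start by making precise what "partial energy of the spins in $F_s$" means in terms of the energy expression (\ref{eq:energy}). I would define it as
$$E_{F_s} = -\sum_{\substack{(i,j) \in E \\ \{i,j\} \cap F_s \neq \emptyset}} J_{i,j} S_i S_j \;-\; \sum_{i \in F_s} h_i S_i,$$
so that the claim becomes: every term in this sum has a value fixed once the spins in $F_s$ and their neighbors are assigned, and no subsequent branching decision can alter it.

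Next I would dispense with the two kinds of contributions in turn. The magnetization terms $-h_i S_i$ for $i \in F_s$ are trivial, since by hypothesis every $i \in F_s$ already has an assigned $S_i$ (and $h_i$ is a fixed input). For the bond terms, I would observe that every edge $(i,j)$ contributing to $E_{F_s}$ has an endpoint $i \in F_s$; by the defining condition of $F_s$, the other endpoint $j$ is adjacent to $i \in F_s$ and therefore already assigned. Hence both $S_i$ and $S_j$ are fixed, and the term $-J_{i,j} S_i S_j$ is completely determined.

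Finally, I would close the argument by showing that any future assignment cannot ``reach into'' $E_{F_s}$. A new assignment alters only those energy contributions that involve the newly assigned spin $k$: bonds $(k,\ell) \in E$ and the magnetization term $-h_k S_k$. If any such bond contributed to $E_{F_s}$, we would need $k \in F_s$ or $\ell \in F_s$. The first is impossible because every spin in $F_s$ is already assigned while $k$ was unassigned immediately prior. The second is also impossible, because $\ell \in F_s$ combined with $k$ adjacent to $\ell$ would force $k$ to already be assigned by the defining property of $F_s$, again a contradiction. Similarly $k \notin F_s$ rules out any change to the magnetization terms in $E_{F_s}$.

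There is no real technical obstacle here; the lemma is essentially a locality statement whose main burden is bookkeeping. The only subtle point I would be careful with is the double-counting in the bond sum: when both endpoints of an edge lie in $F_s$, the edge still contributes exactly once to $E_{F_s}$, and the argument above handles this case identically. With that caveat in place, the conclusion follows directly from the definition of $F_s$.
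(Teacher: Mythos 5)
Your proof is correct and takes essentially the same approach as the paper's: both hinge on the observation that a newly assigned spin can only perturb energy terms involving spins adjacent to it, and by the definition of $F_s$ no unassigned spin is adjacent to (or equal to) a member of $F_s$. The only cosmetic difference is that you define the partial energy as an aggregate sum drawn from Equation~(\ref{eq:energy}) and argue term by term, whereas the paper phrases it per-spin via the incremental bounds of Equations~(\ref{eq:edelta})--(\ref{eq:edeltam}) and closes with the same contradiction.
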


    \begin{proof}
	Let $i \in F_s$, then the partial energy lower bound that is localized
	around spin $i$ is the given by $E^i_{lb} = \pm 2\sum_{(i, j) \in
	E}J_{i,j}S_iS_j \pm 2h_i$ (the $\pm$ stands for the cases described in
	Equations \ref{eq:edelta} and \ref{eq:edeltam}). By definition of $F_s$, we
	know that every spin $j$ adjacent to $i$ has also been assigned. Suppose that
	later in the branching process we assign spin $u$ and this causes a change in
	$E^i_{lb}$. By definition of $E^i_{lb}$, $u$ must be adjacent to $i$.
	This implies that $i \notin F_s$ since its neighbor $u$ had not been assigned
	until recently, which is a contradiction.
    \end{proof}

This allows us to evaluate a partial solution by flipping the values of the spins in $F_s$ and
comparing the partial energies. If any of the partial energies are lower than
$E_{lb}$ or if the modified configuration corresponds to a visited partial
solution, then we know that the current partial solution is unpromising and we
can safely backtrack.

    \begin{figure}[ht]
        \centering
        \includegraphics[width=.5\linewidth, height=25mm]{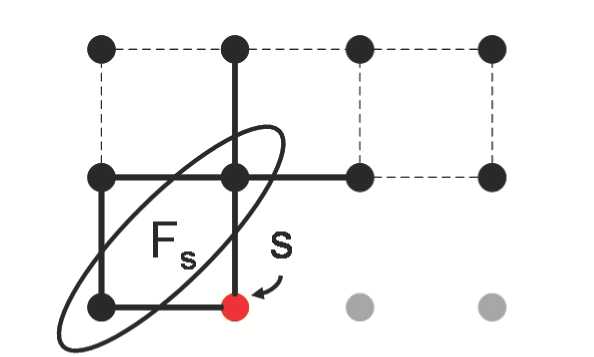}
        \caption{\label{fig:bbdom} Illustration of $s$ and $F_s$ on a small grid.}
    \end{figure}

    \begin{lemma} \label{lem:bbprune}
	Let $F_s$ be the set of spins such that $\forall i \in F_s$, all spins
	adjacent to $i$ are assigned. Also, let $\sigma_{F_s}$ correspond to some
	configuration of the spins in $F_s$. If we can find $\sigma^\prime_{F_s}$ such
	that $E(\sigma^\prime_{F_s}) < E(\sigma_{F_s})$, then any branches that include
	$\sigma_{F_s}$ can be safely pruned.
    \end{lemma}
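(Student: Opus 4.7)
The plan is to exploit the locality established by Lemma~\ref{lem:bblocale} to split the total energy into a piece that depends only on $F_s$ together with its already-assigned neighbors, and a piece that does not involve $F_s$ at all, and then run a dominance argument over completions of the current branch.

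First I would write any full spin configuration $\sigma$ extending the current partial assignment as $E(\sigma) = E_{F_s}(\sigma) + E_{\mathrm{rest}}(\sigma)$, where $E_{F_s}$ collects every bond incident to at least one spin of $F_s$ together with the magnetization terms of those spins, and $E_{\mathrm{rest}}$ gathers everything else. By the definition of $F_s$, every endpoint touched by a term in $E_{F_s}$ is already assigned, so by Lemma~\ref{lem:bblocale} the value of $E_{F_s}$ is locked in by the current partial assignment and is unaffected by any future branching. Equally importantly, no term in $E_{\mathrm{rest}}$ involves a spin of $F_s$, so $E_{\mathrm{rest}}(\sigma)$ is unchanged if we overwrite $\sigma_{F_s}$ by $\sigma'_{F_s}$ while leaving every other spin fixed.

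Second I would compare an arbitrary completion $\sigma^{(1)}$ of the current branch with the ``twin'' completion $\sigma^{(2)}$ obtained by replacing $\sigma_{F_s}$ with $\sigma'_{F_s}$ and keeping every spin outside $F_s$ identical. By the decomposition,
\[
    E(\sigma^{(2)}) - E(\sigma^{(1)}) \;=\; E(\sigma'_{F_s}) - E(\sigma_{F_s}) \;<\; 0
\]
by hypothesis. Hence every completion in the subtree that commits to $\sigma_{F_s}$ is strictly dominated by its twin in the subtree that commits to $\sigma'_{F_s}$; in particular, the best completion containing $\sigma_{F_s}$ is strictly worse than the best completion containing $\sigma'_{F_s}$. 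Because the B\&B tree enumerates all $2^n$ configurations, the twin subtree is guaranteed to live elsewhere in the search, so discarding the $\sigma_{F_s}$ branch cannot remove a ground state.

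The main obstacle, which I expect to be largely a bookkeeping matter, is the careful treatment of bonds that straddle $F_s$ and $V\setminus F_s$: one must confirm that every such bond is absorbed into $E_{F_s}$ rather than $E_{\mathrm{rest}}$, and that its non-$F_s$ endpoint has indeed been assigned so that the bond's contribution is frozen. Both facts follow immediately from the defining property of $F_s$---every neighbor of every $i \in F_s$ is already assigned---combined with Lemma~\ref{lem:bblocale}. The conceptual core of the argument is just that $F_s$ is \emph{decoupled} from the unassigned portion of the graph, so any local improvement on $F_s$ lifts directly to a strict global improvement on every completion, justifying the prune.
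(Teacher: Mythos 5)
Your proof is correct and follows essentially the same route as the paper's: invoke Lemma~\ref{lem:bblocale} to freeze the energy contribution of $F_s$, then swap $\sigma_{F_s}$ for $\sigma'_{F_s}$ in any completion to obtain a strictly better configuration, so the branch is dominated. Your version is simply more explicit about the decomposition $E = E_{F_s} + E_{\mathrm{rest}}$ and about why the dominating twin is guaranteed to be reachable elsewhere in the search tree, details the paper leaves implicit.
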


    \begin{proof}
	Consider the partial energy $E(\sigma_{F_s})$ and the total energy of
	any complete spin-configuration $\sigma$ that extends the partial
	spin-configuration. Let $\sigma^\prime_{F_s}$ be a configuration of $F_s$ that
	minimizes the partial energy, i. e., $E(\sigma^\prime_{F_s}) <
	E(\sigma_{F_s})$. By Lemma \ref{lem:bblocale}, new spin assignments will not
	affect $E(\sigma_{F_s})$. Since $\sigma_{F_s}$ is included in $\sigma$ we can
	swap $\sigma_{F_s}$ with $\sigma^\prime_{F_s}$ so that $E(\sigma)$ is also
	minimized. Thus, any partial or complete spin-configuration that includes
	$\sigma_{F_s}$ is not promising. \\
    \end{proof}

Observe that in cases when different branches are
unlikely to have equal partial cost (e.g., when couplings and
magnetizations are random), for two branches, the probability that the first
branch dominates the second branch is approximately $1/2$. Let $0 < c \leq 1$
be the fraction of ($2^k$ partial solutions) that require branching. Then we
can expect  to prune $c(2^k/2) = c(2^{k-1})$ of these branches.
As seen in Figure \ref{fig:bbchart}, this pruning technique improved the
scalability of our B\&B algorithm by 1-2 orders of magnitude,
allowing it to solve $100$-spin lattices in a day. However, even with
the techniques proposed, B\&B takes exponential time in the worst
case and therefore fails to scale beyond $100$ spins.

\section{GSD through Local Search} \label{sec:heuristicgs}

Due to the difficulty of solving general instances of GSD, many
researchers \cite{Bieche80} \cite{PardellaL08} have developed heuristic methods to improve the
scalability of their algorithms at the expense of solution quality, typically
based on slow Monte Carlo simulations. However, because of the role that Ising
models play in simulating real-world phenomena, it is desirable to have much
faster techniques. To this end, we propose a high-performance local search that
meets such scalability and performance requirements.

Our local search is an iterative improvement algorithm that modifies the bipartition induced by an arbitrary spin
configuration (positive spins are placed in one partition and negative spins in
the other). The algorithm performs a sequence of incremental changes to the
bipartition, organized as {\em passes}. These changes consist of {\em spin
moves} that place a particular spin in the partition opposite to the one it
is currently in. At the beginning of each pass, the energy differential
({\em gain}) of performing each possible move is calculated. A positive gain
implies that the move decreases the overall energy while a negative gain
increases it. During a pass, the move that produces the largest gain is
selected and executed. The corresponding spin is then labeled as {\em locked},
i.e., it cannot be selected again in the current pass to prevent ``undo''
moves. The pass continues selecting and executing the best moves until all
spins have been locked. At the end of the pass, we save the best-seen
bipartition produced by the sequence of moves. This bipartition is then used as
the starting solution of the next pass. The entire algorithm terminates when a
pass fails to obtain an improvement in energy as shown in Figure
\ref{alg:locsearch}. Note that, in the absence of positive-gain moves, a
negative-gain move can be selected. Thus, a pass may accept a solution that is
worse than the existing solution (hill-climbing). This helps to reduce the
probability of getting trapped in local minima. Figure \ref{fig:fmpass} illustrates the progress of our local search in terms
of solution costs during individual passes on a $1024$-spin glass. The initial
random solution used in our local search is generated in linear time. Assuming that initial
spin configurations are drawn from a uniform distribution, our local search
finds an optimal solution with probability at least $1/2^n$ for $n$ spins (in
practice, much higher than that, as indicated in Figure \ref{fig:fmfactor}).
The speed of our algorithm can be converted into better solutions by generating
independent random inital spin configurations, running (otherwise deterministic)
optimization passes on each, and selecting the best result. \\

    \begin{figure}[!b]
        \centering
        \begin{tabular}{|c|}
        \hline
        \begin{minipage}{\linewidth}
        \footnotesize
        \vspace{1mm}
        \begin{program}
                |\bf Input: | |Ising spin-glass graph | G_{ising}
                |\bf Output: | |Approximate ground-state energy | E^*
                ---------------------------
                |Spin Partition | SP^* := RAND\_SPIN\_PARTITION(G_{ising})
                |\bf while| | solution quality improves | |\bf do|
                        ~~~~~~|Gains container | GC := COMPUTE\_GAINS(SP^*)
                        ~~~~~~|Spin Partition | SP := SP^*
                        ~~~~~~|\bf while | GC | has unlocked spins | |\bf do|
                                ~~~~~~~~~~| Move | m := SELECT\_BEST\_MOVE(GC)
                                ~~~~~~~~~~APPLY\_MOVE(SP,m)
                                ~~~~~~~~~~UPDATE\_GAINS(GC, m)
                                ~~~~~~~~~~LOCK\_SPIN(GC, m)
                                ~~~~~~~~~~|\bf if| | energy decreased | |\bf then | SP^* := SP
                        ~~~~|\bf end while |
                |\bf end while |
                |\bf return | ENERGY(G_{ising},SP^*)
        \end{program}
        \end{minipage} \\ \hline
        \end{tabular}
        \caption{\label{alg:locsearch} Pass-based local search with hill-climbing}
    \end{figure}

\noindent {\bf Efficient gain updates.} Each move causes a change in the 
local energy surrounding the selected spin, therefore, the gains of the 
neighboring spins need to updated after each move.
When the graph capturing a spin system is sparse (e.g., Ising lattices),
only a constant number of gain updates are executed per move.
These gain updates are performed efficiently using the custom heap-based data
structure shown in Figure \ref{fig:heapgains}. The data structure consists of
two arrays. The first array implements a traditional binary heap while the
second array allows quick access to the heap-array element that contains the
gain-update value of a particular spin. To perform gain updates, we can access
the specific value in $O(1)$ time, update the value, and perform the necessary
swaps to maintain heap-order property. Since only $\log(n)$ swaps are required
in the worst case (where $n$ is the number of spins), our data structure allows
us to perform gain updates much faster than naive implementations that require
scanning the entire set of $n$ gain values. Since a total of $n$ moves are 
performed during a pass. This gives a total runtime of $O(n\log(n))$
for a single pass.\\

     \begin{figure}[!t]
        \centering
        \includegraphics[width=.85\linewidth, height=35mm]{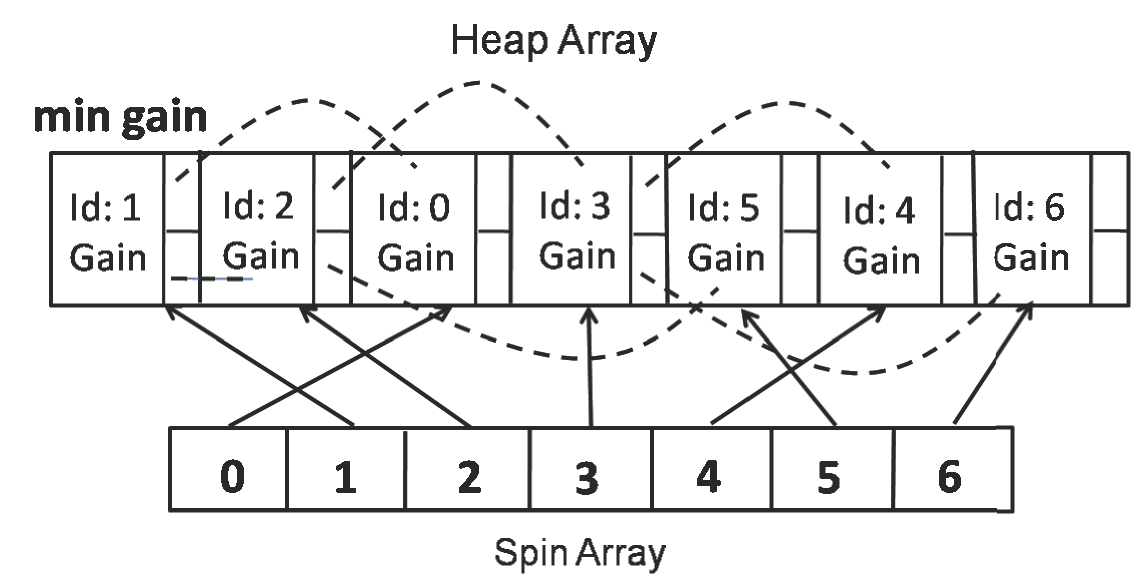}
	\caption{\label{fig:heapgains} Heap-based data structure for performing gain updates
	efficiently. Dashed lines indicate pointers to child nodes.
	Spin array pointers are updated accordingly after every heap swap.}
    \end{figure}
 
    \begin{figure}[!b]
        \centering
        \includegraphics[scale=.6]{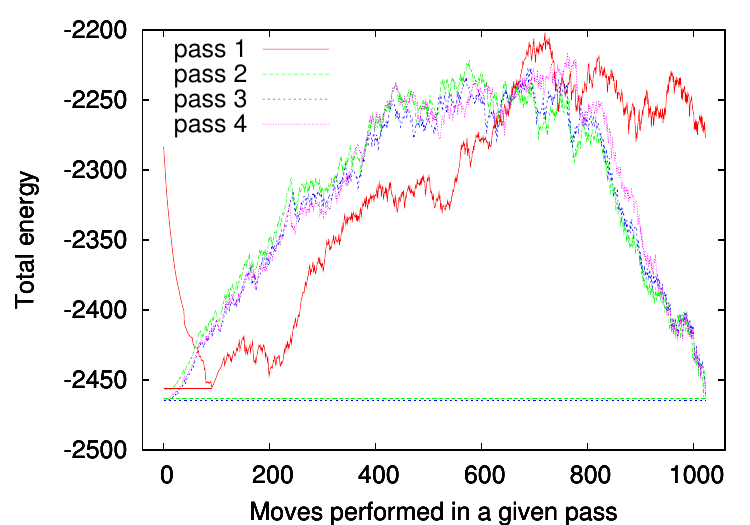}
	\caption{\label{fig:fmpass} Progress of local search in terms of total
	energy during individual passes on a $1024$-spin glass. The lowest-energy state
	observed during each pass ($E_1=-2456$, $E_2=-2463$, $E_3=-2465$, respectively)
	becomes the starting state of the next pass.}
    \end{figure}

\section{Empirical Validation} \label{sec:results}

We evaluated single-threaded implementations of our algorithms on a
conventional Linux server, although our local search is trivial to parallelize
to a multicore system or a distributed cluster. For $15 \times 15$ spin
lattices our local search finds exact ground states in $95\%$ of independent
random starts (exact solutions were obtained from \cite{UCologne}), otherwise
solutions are $5\%$ sub-optimal on average. Figure \ref{fig:fmqualchart}(a)
compares the average solution quality of local search for $2$-D spin glasses
with Gaussian-distributed couplings and $h_i = 0$ (instances with $h_i\neq 0$
are not allowed in \cite{UCologne}). For each instance we considered four
different levels of effort with an increasing number of independent random
starts. To obtain the average solution quality we computed $1000$ output
samples using $1$, $\ln^2 n$ and $n$ random starts ($n$ is the number of spins)
per instance. For $n^2$ random starts, we used fewer output samples and provide confidence intervals.
As expected, the solution quality improves as the number of random starts
increases. When at least $\ln^2 n$ random starts are used, our heuristic
produces high-quality solutions ($> 95\%$) for five of the benchmarks while its
runtime does not exceed $17$ seconds for the largest benchmark ($2500$ spins).
Note that the expected solution quality slowly decreases for larger instances.
Figure \ref{fig:fmqualchart}(b) shows similar results for benchmarks with
$\pm1$-bimodal coupling distributions, but solutions are closer to optimal. For
all but one of the benchmarks, our heuristic requires only a single random
start to find high-quality solutions. \\

    \begin{figure}[!b]
        \centering
        \includegraphics[width=\columnwidth, height=55mm]{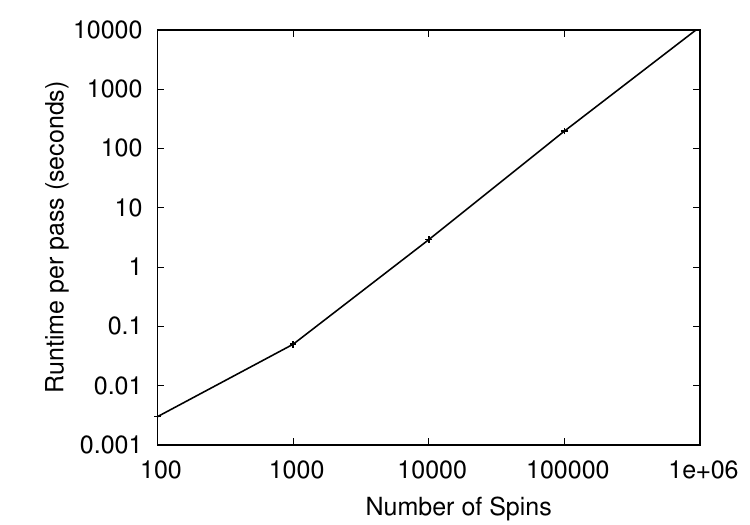}
\parbox{8cm}{
        \caption{Runtime of local search on large $2$-D spin lattices.}
        \label{fig:runtimes}
}
    \end{figure}

    \begin{figure*}
        \centering
        \begin{tabular}{cc}
            \includegraphics[width=\columnwidth]{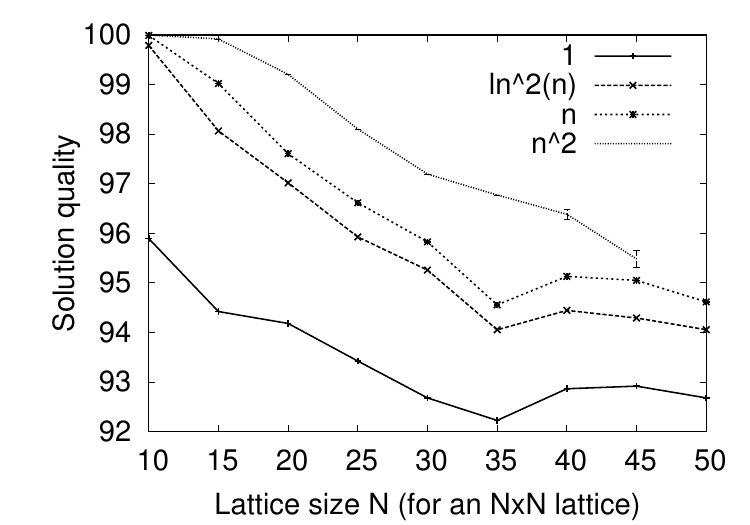} 
            & \includegraphics[width=\columnwidth]{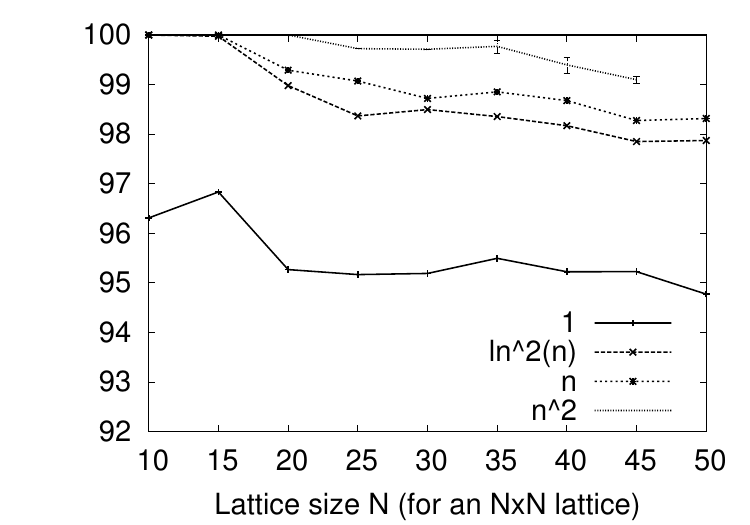} \\
            (a) \footnotesize{Gaussian-distributed couplings} & (b) \footnotesize{Bimodal-distributed couplings} \\
        \end{tabular}
\parbox{12cm}{
        \caption{\label{fig:fmqualchart} Expected solution quality ($100\%$ corresponds to the exact solution obtained from \cite{UCologne}) produced by using different numbers of random starts on $2$-D spin lattices.}
             }
     \end{figure*}

\noindent {\bf Local optimality}. We verified that the
configurations returned by our heuristic cannot be improved by modifying a
small number of spins. We used our B\&B to find optimal configurations of
groups of $25$-$49$ adjacent spins within larger configurations. In our
experiments, the solutions produced by our local search were never improved by
this technique. \\

\noindent {\bf Runtime}. Figure \ref{fig:runtimes} shows that our heuristic
scales to a million spins and its runtime is consistent with the complexity
estimate of $O(n\log(n))$ per pass (see Section \ref{sec:heuristicgs}). We compared the
runtime of our local search against that of the MWPM-based heuristic proposed
in \cite{PardellaL08} (see Section \ref{sec:prev}). Recall that the
heuristic works on a reduced dual graph instead of the complete one required to
find exact ground states. The reduced dual graph ignores those edges that are
deemed ``heavy'', i.e., with weights above a chosen threshold $c_{max} = c\cdot
J_{max}$, where $c = 2,4,6$. The idea is that these heavy edges are rarely
contained in the optimum solution and can be ignored. Thus, the solution
quality of the heuristic depends on $c_{max}$ and the positive-to-negative edge
ratio since the more edges we ignore, the higher the probability of missing the
ground state. In contrast, our heuristic does not have such a dependency and
works on all instances. Furthermore, it is not clear how to chose $c_{max}$ in
the case that the couplings weights are Gaussian distributed. In contrast, our
heuristic can be used with any coupling distribution. Table IV in \cite{PardellaL08}
shows the runtime and solution quality of the MWPM-based heuristic for
different values of $c_{max}$ on $\pm1$ grid graphs of size $164 \times 164$
with $.5$ positive-to-negative edge ratio. The fastest runtime of the cited
heuristic is obtained when $c_{max} = 3$ taking an average of $58.72$ seconds
(with negligible deviation) and producing the optimal value only $61\%$ of the
time. By comparison, our local search heuristic on a comparable benchmark with
a single random start takes about $8.5$ seconds. Thus, we can perform $7$
random starts in the same period of time. However, we have no way of comparing
the quality of our solutions since we do not have access to the same benchmarks.
Section 4.5 of \cite{HartmanR04} describes the branch-and-cut approach used by the
Cologne Spin Glass Server. This algorithm is limited to lattices
without magnetization (our techniques do not suffer this limitation) 
and mentions a $128$-second runtime for the largest instance ($50 \times 50$).
In contrast, our local search heuristic takes only an average of $16.63$
seconds using $\ln^2(n)$ random starts on the same instance and produces
solutions that are within $95.5\%$ of the optimal value on average. Solutions
closer to optimum may be found if the number of random starts is increased.

\section{GSD for arbitrary hypergraphs} \label{sec:factor}

As discussed in Section \ref{sec:intro}, recent AQC architectures \cite{Peng08}
are based on Ising spin glasses. Implementations described by D-Wave Systems use
non-planar topologies, and recent experiments in \cite{Kim09} demonstrate
direct coupling of more than two spins. Hence, we extend the conventional spin-glass 
model to use {\em hyper-couplings} that connect a set of at least two
spins. The new energy function is given by
     \begin{equation} \label{eq:hyperenergy}
        E(\sigma) = -\displaystyle\sum_{e \in E} J_{e}\prod_{i\in e}S_i - \sum_{i \in V}h_iS_i
     \end{equation}
This formulation further extends the applicability of our computational approach to
energy minimization. Adapting our algorithms to handle this formulation allows
us to study a greater variety of physical systems and solve a wider 
range of combinatorial problems (e.g., number fac\-tor\-ing--discussed later
in this section). Our algorithms required minor modifications to handle hyper-Ising models. In the
case of B\&B, the incremental energy calculations (Equations
\ref{eq:edelta} and \ref{eq:edeltam}) were updated to conform with Equation \ref{eq:factore}. For local
search, the gain-update procedure was revised to account for the presence of
hyper-couplings. \\

\noindent {\bf Number-factoring as an optimization problem}. We observe that integer factorization is equivalent to the optimization of the following function,
    \begin{equation}    \label{eq:factore}
        f(x,y) = (N - xy)^2
    \end{equation}
\noindent where $N$ is the odd integer we wish to factor and $x$, $y$ are positive integers. The minimum of $f(x,y)$ is reached when $x$ and $y$ are factors of $N$. To solve this optimization problem using GSD, we must construct an Ising system whose ground state corresponds to the global minimum of $f(x,y)$. Since spins are $\pm1$ binary variables, we re-formulate $f(x,y)$ in terms of binary digits. Let $n_x$ and $n_y$ be the number of binary digits required to represent $x$ and $y$, respectively.
    \begin{equation}    \label{eq:factorbin}
        f(x,y) = \left[N - \left(\sum_{i = 1}^{n_x - 1} 2^{i}x_i + 1\right)\left(\sum_{i = 1}^{n_y - 1} 2^{i}y_i + 1\right)\right]^2
    \end{equation}
\noindent Since $N$ is odd, $x$ and $y$ are also odd and therefore the last term in each sum is one. Thus, the variables $x_0$ and $y_0$ are ignored. The number of spins is $(n_x - 1) + (n_y - 1) = \mid V\mid$. Setting $S_k = 2x_k - 1$ and $S_{n_x + k} = 2y_k - 1$,
    \begin{align}    \label{eq:factorising}
	\hspace{-1mm}
        f(x,y) &= \left[N-\left(2^{n_x - 1}\frac{1 - S_{n-1}}{2} + ... + 2\frac{1 - S_{n - n_x}}{2} + 1\right)\right. \notag \\
        &\left.\left(2^{n_y - 1}\frac{1 - S_{n - n_x - 1}}{2} + ... + 2\frac{1 - S_{1}}{2} + 1\right)\right]^2
    \end{align}
\noindent The magnetization weights and coupling strengths are given by the
product expansion of Equation \ref{eq:factorising}. Terms of the form  $c\cdot
S_i$ correspond to spins with magnetization $h_i = c$, and terms of the form
$c\cdot S_iS_j..S_k$ correspond to hyper-couplings $e = (i, ..., k)$ with
strength $J_e = c$. Figure \ref{fig:factor} shows the Ising graph constructed
for factoring $21$. \\

    \begin{figure}[!b]
        \centering
        \includegraphics[scale=.5]{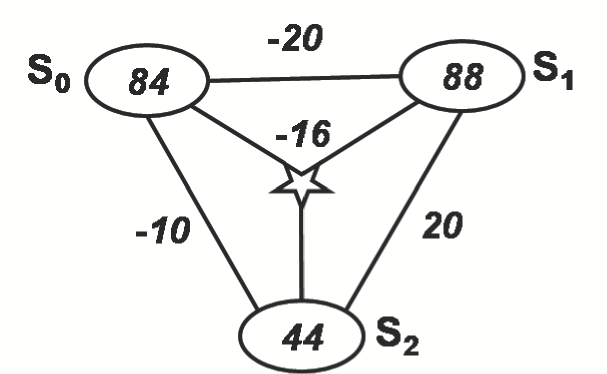}
        \caption{Ising-model hypergraph for factoring $N = 21$. In this example, we assume $x < y$ and set $n_x = 1$ and $n_y = 2$. The optimization function is $f(x,y) = 210+84S_0 + 88S_1 + 44S_2 -20S_0S_1-10S_0S_2+20S_1S_2-16S_0S_1S_2$.}
        \label{fig:factor}
     \end{figure}

    \begin{figure*}
        \centering
        \begin{tabular}{cc}
            \includegraphics[angle=-90, scale=.35]{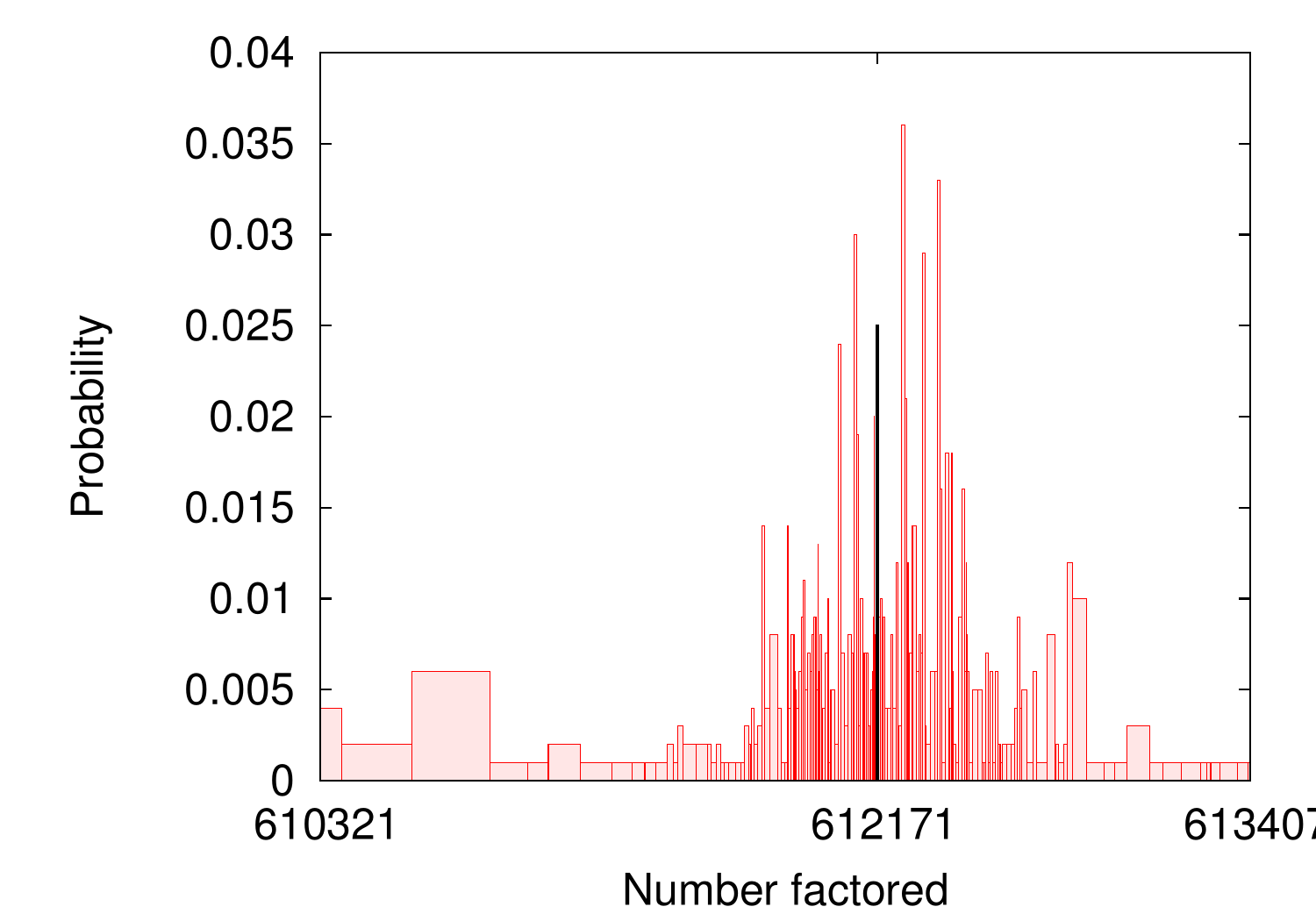} 
            & \includegraphics[angle=-90, scale=.35]{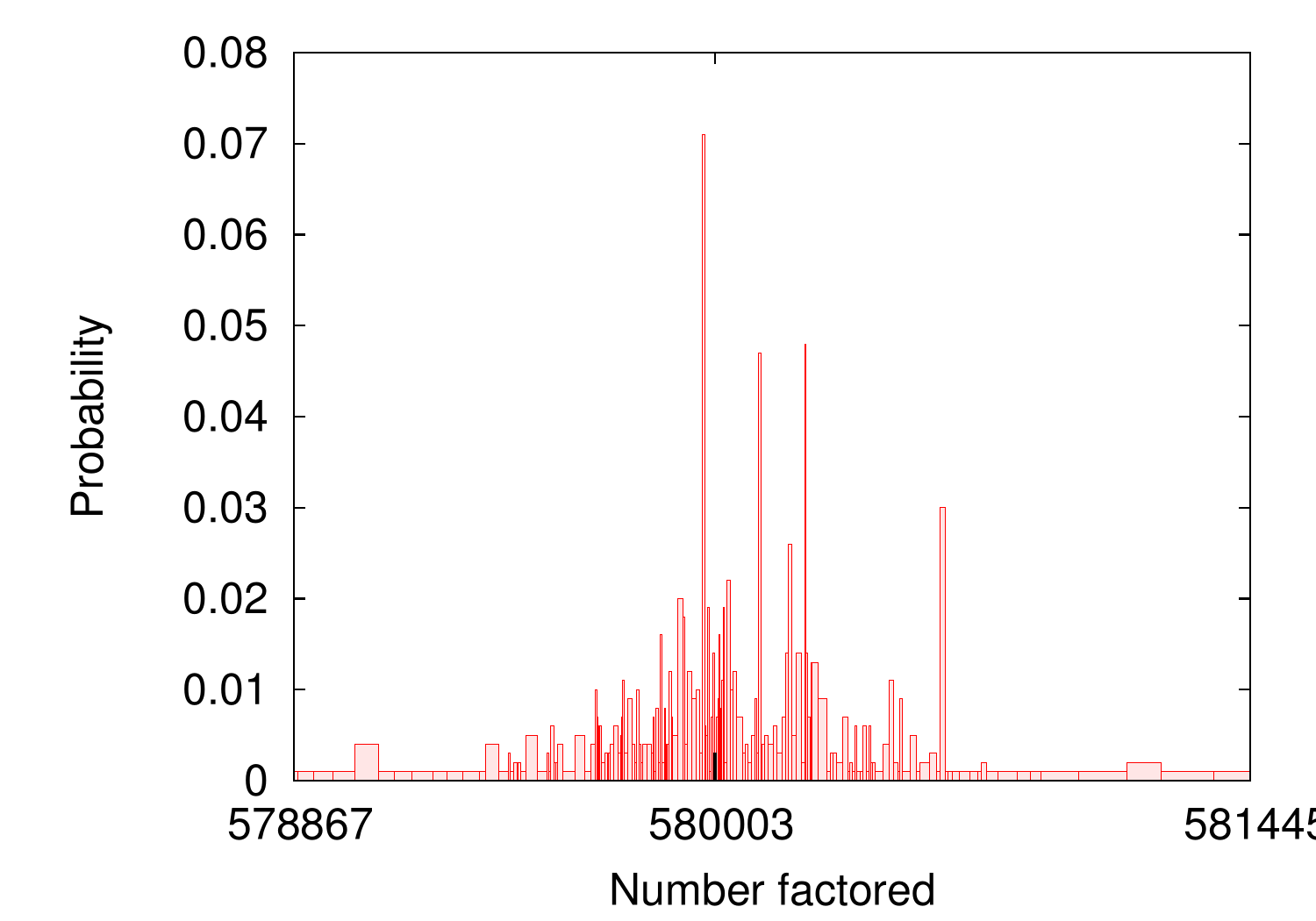}\\
            (a) \footnotesize{$612171$ is factored with probability $.025$.} & (b)\footnotesize{$580003$ is factored with probability $.005$.} \\
               Factors are $3$, $7$, $41$ and $79$. & Factors are $619$ and $937$. \\
        \end{tabular}
\parbox{15cm}{
        \caption{\label{fig:fmfactor}Probability distribution of factoring numbers using our GSD heuristic. The black lines show the probability of finding the correct factors using a single random start. Given that $100$ attempts take $<.01$ seconds, such numbers can be factored reliably.}
             }
    \end{figure*}

\noindent {\bf Computational experiments}. B\&B can factor up to $21$-bit
numbers in approximately $5$ minutes.  While leading-edge number-factoring
algorithms can do better, the results confirm that B\&B is general enough to
work on hyper-Ising models. We tested our GSD heuristic by factoring specific
numbers. In some runs, this technique produces the factors of $N\pm 2$ or
other incorrect numbers. Therefore, many independent random runs may be
required to factor a given number $N$, which is also typical behavior in 
adiabatic quantum computers (AQC). Figure \ref{fig:fmfactor} shows the output probability distribution of
factoring the number $612171$, which has several factors and is therefore easy
to factor. The probability of success drops sharply for semi-primes, e.g., the
probability of factoring $580003$ in one attempt is only $.005$. Therefore, our 
heuristic takes longer to find correct factors in such cases.
The current implementation of our algorithm is not particularly optimized, but
factored the semi-prime $10185081163 = 100511 \times 101333$ using $15,000$
random starts in $13$ seconds.  While further optimizations may significantly
improve runtime, present results provide a computational baseline for
performance evaluation of non-traditional computing devices that solve hard
problems using the principle of energy minimization. Such devices, e.g.,
AQC, would need to improve on our results by producing
distributions that increase the probability of factoring the correct number. \\

\noindent {\bf Significance of hyper-couplings}.
Recently, Peng et al. \cite{Peng08} implemented an AQC
number-factoring algorithm in NMR technology using a Hamiltonian that ignores
hyper-couplings. By simulating the functionality of the same algorithm, we replicated
their experiment and explored factorization of
larger numbers. As in \cite{Peng08}, the number $21$ was successfully factored
even in the absence of hyper-couplings. However, our experiments show that, in
general, omitting hyper-couplings alters the ground states so that correct
factors cannot be found, e.g., for $35$ and $91$. Instead, some other numbers
are factored such as $33$ and $95$. Although hyper-couplings have smaller
individual weights than two-spin couplings, the number of hyper-couplings
scales as $O(n^4)$, and their total weight eventually dominates $f(x,y)$ for
larger $N$. Control of three-spin hyper-couplings has recently been demonstrated
\cite{Kim09}, but only for adjacent particles, which would be insufficient for
number-factoring applications. Furthermore, Equation \ref{eq:factorising} still
requires four-spin hyper-couplings which, as current research suggests, are
difficult to realize experimentally. \\

    \begin{figure}[!b]
        \centering
        \includegraphics[width=\columnwidth]{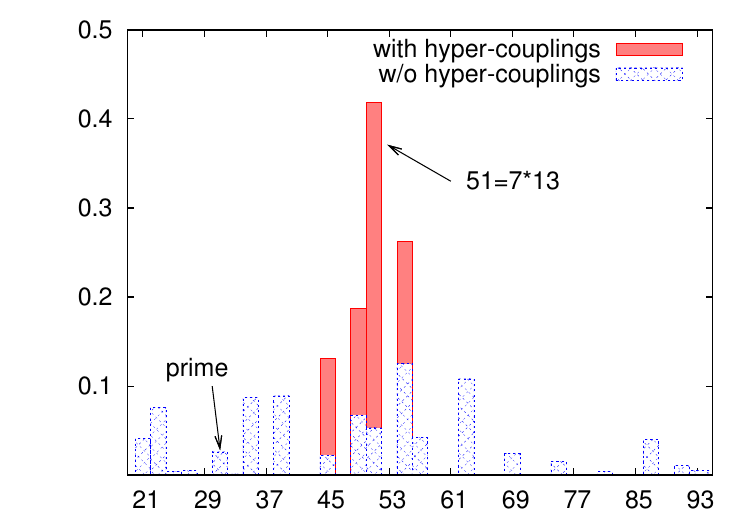} \\ 
		\parbox{7cm}{
        	\caption{\label{fig:aqccomp} Output probability distributions for factoring $51$ using the techniques 			from \cite{Peng08} and \cite{Schaller09} as simulated by our algorithms.}
        }
    \end{figure}

\noindent {\bf Avoiding hyper-couplings via ancilla spins}. The work in
\cite{Schaller09} shows a method for expanding Equation \ref{eq:factorising} that
avoids the use of hyper-couplings at the cost of a quadratic increase in spins.
The new Hamiltonian is based on the factorization equations that are generated by decomposing
long-hand binary multiplication \cite{Schaller09}. Equation \ref{eq:factorising} is modified
by introducing ancilla binary variables. Let $N=m_1m_2...m_n$. Let $x={x_1x_2...x_k}$
and $y={y_1y_2...y_{n-k}}$. Then let $p_{i,j}$ denote the sum of pairwise products between
the binary variables in Equation \ref{eq:factorising} and let $c_{i,j}$ denote carry variables
	\vspace{-3mm}
        \begin{align} \label{eq:factorisingnew}
        f(x,y) &= \displaystyle\sum_{i=1}^{k}\sum_{j=1}^{n-k}\left[2\left(
x_i/2 + y_j/2 + p_{i,j} + c_{i,j} \right.\right. \\
&\left.\left.-p_{i+1,j-1} - 2c_{i-1,j} - 1/4\right)^2 - 1/8\right] \notag
        \end{align}
	\vspace{-5mm}

\noindent The bits of $N$ are linked to the bits of $x$ and $y$ by a series of
equalities such as $p_{i,0}=m_i$ and $p_{k+1,j-1} = m_{k+j}$ (see
\cite{Schaller09} for details). $f(x,y)$ now computes a penalty function for
violating the factorization equations. This penalty is minimized subject to the
fixed values of $m_1...m_n$. Equation \ref{eq:factorisingnew} maps to a spin
system with only two-spin couplings. Unlike the (unconstrained) formulation
with hyper-couplings, this formulation includes spins with fixed values and
thus requires additional technology support (e.g., optical pumping of trapped
ions). We used our algorithms to compare the use of hyper-couplings (assuming
technological feasability) to the expansion from \cite{Schaller09}. Figure
\ref{fig:aqccomp} compares output probability distributions generated when
factoring $51$. The technique from \cite{Schaller09} produces a flatter
distribution with a wider range. This is because the solution space is more
complex and includes configurations with inconsistent ancilla spins (e.g.,
carry spin $c_{i,j}=0$, but the partial product spin $p_{l,k}=c_{i,j}*1=1$).
Thus, this technique sometimes returns trivial decompositions of prime numbers
(e.g., 31), whereas direct use of hyper-couplings always results in proper
factorizations. In summary, our experiments suggest that the expansion from
\cite{Schaller09} is not computationally competitive with direct use of
hyper-couplings.

\section{Conclusions} \label{sec:conclude}

The problem of finding the ground state of spin glasses (GSD) has played a
central role in statistical physics.  Additionally, high-performance GSD
algorithms can help evaluate architectural alternatives in adiabatic quantum
computing.  Recent work in \cite{Battaglia05} concludes that quantum
annealing loses to classical simulated annealing in a head-to-head
comparison, but can be improved by making different architectural choices.
Although our algorithms do not simulate quantum optimization directly, they
allow one to study problem reductions and identify potential obstacles to
successful optimization. The proposed B\&B algorithm can find ground states
for general $2$-D spin lattices with up to $100$ spins in $24$ hours. For GSD
instances with $1,000,000$ spins, our local search heuristic obtains
approximate solutions in $< 4$ hours. It provides a scalability advantage over
conventional Monte Carlo methods and is not limited to special classes of GSD
instances. When our heuristic does not find a ground state, it usually
approximates the least energy within $5\%$.

We replicated a recently published empirical result in AQC-based number
factoring \cite{Peng08}, where the omission of spin-spin hyper-couplings did
not undermine overall results.  However, we have shown that, in general,
omitting hyper-couplings will produce incorrect results. Furthermore, we
demonstrated that techniques avoiding the use of hyper-couplings
\cite{Schaller09} blur the output probability distribution, hamper finding
correct factors, and require more repeated attempts to achieve success.

\end{document}